\theoremstyle{plain}
\numberwithin{equation}{section}
\newtheorem{thm}{Theorem}[section]
\newenvironment{exam}[1]
{\begin{flushleft}\textbf{Example #1}.\enspace}%
{\end{flushleft}}
\newcommand{\positive}{{\mathbb N}}
\newcommand{\complex}{{\mathbb C}}
\newcommand{\lscript}{{\mathcal L}}
\newcommand{\rmprob}{\mathrm{prob}}
\newcommand{\vhat}{\widehat{v}}
\newcommand{\sixhat}{\widehat{6}}
\newcommand{\sevenhat}{\widehat{7}}
\newcommand{\eighthat}{\widehat{8}}
\newcommand{\ctimes}{\mathrel{\mathlarger\cdot}}
\newcommand{\ab}[1]{\left|#1\right|}
\newcommand{\brac}[1]{\left\{#1\right\}}
\newcommand{\paren}[1]{\left(#1\right)}
\newcommand{\sqbrac}[1]{\left[#1\right]}
\newcommand{\elbows}[1]{{\left\langle#1\right\rangle}}
\begin{document}

\title{WAVE EQUATIONS FOR\\DISCRETE QUANTUM GRAVITY
}
\author{S. Gudder\\ Department of Mathematics\\
University of Denver\\ Denver, Colorado 80208, U.S.A.\\
sgudder@du.edu
}
\date{}
\maketitle

\begin{abstract}
This article is based on the covariant causal set ($c$-causet) approach to discrete quantum gravity. A $c$-causet $x$ is a finite partially ordered set that has a unique labeling of its vertices. A rate of change on $x$ is described by a covariant difference operator and this operator acting on a wave function forms the left side of the wave equation. The right side is given by an energy term acting on the wave function. Solutions to the wave equation corresponding to certain pairs of paths in $x$ are added and normalized to form a unique state. The modulus squared of the state gives probabilities that a pair of interacting particles is at various locations given by pairs of vertices in
$x$. We illustrate this model for a few of the simplest nontrivial examples of $c$-causets. Three forces are considered, the attractive and repulsive electric forces and the strong nuclear force. Large models get much more complicated and will probably require a computer to analyze.
\end{abstract}

\section{The Basic Model}  
In a previous article, the author introduced a free wave equation for discrete quantum gravity \cite{gud15}. According to this model, the geometry of a universe is determined by nature (through a quantum formalism) and it is this geometry that we call gravity. We say that a particle is free if it is not acted on by any forces except gravity and gravity is not a force anyway, it is geometry. The propagation amplitude of a free particle is governed by the free wave equation. We showed in \cite{gud15} that solutions to this equation predict that particles tend to move along geodesics given by the geometry This implies that particles are attracted toward regions of large curvature. In this sense, the mass-energy distribution is determined by the geometry which is opposite to the postulates of classical general relativity theory.

In the present article, we extend the work in \cite{gud15} to include other interactions. Unlike the free wave equation which describes the motion of a single particle, the equations we introduce here describe two interacting particles. The main difference is that the free wave equation involves functions of a single variable, while the interaction wave equations involve functions of two variables. We now briefly review the basic model for this theory. For more details, we refer the reader to \cite{gud13,gud14,gud15}.

We shall model the structure of a universe by a causal set or causet \cite{bdghs, hen09,sor03,sur11,wc15}. Mathematically, a
\textit{causet} is a partially ordered set $(x,<)$. For $a,b\in x$, we interpret $a<b$ as meaning that $b$ is in the causal future of $a$. If $a<b$ and there is no $c\in x$ with $a<c<b$ we say that $a$ is a \textit{parent} of $b$ and write $a\prec b$. Denoting the cardinality of $x$ by
$\ab{x}$, a \textit{labeling} of $x$ is a map $\ell\colon x\to\brac{1,2,\ldots ,\ab{x}}$ such that $a<b$ implies $\ell (a)<\ell (b)$. A labeling of $x$ may be considered a ``birth order'' of the vertices of $x$. A \textit{covariant causet} ($c$-\textit{causet}) is a causet that has a unique labeling \cite{gud13,gud14,gud15}. In this article, we shall only model possible universes by $c$-causets.

A \textit{path} in a $c$-causet $x$ is a finite sequence $a_1a_2\cdots a_n$ with $a_1\prec a_2\prec\cdots\prec a_n$. The
\textit{height} $h(a)$ of $a\in x$ is the cardinality, minus one, of a longest path in $x$ that ends with $a$. Two vertices $a,b$ are
\textit{comparable} if $a<b$ or $b<a$. It is shown in \cite{gud14} that a causet $x$ is a $c$-causet if and only if $a,b\in x$ are comparable whenever $h(a)\ne h(b)$. We call the set
\begin{equation*}
S_j(x)=\brac{a\in x\colon h(a)=j}
\end{equation*}
the $j$th \textit{shell} of $x$. Letting $s_j(x)=\ab{S_j(x)}$, $j=0,1,2,\ldots ,k$, we call $\paren{s_0(x),s_1(x),\ldots ,s_k(x)}$ the
\textit{shell sequence} of $x$. A $c$-causet is uniquely determined by its shell sequence. Conversely, any finite sequence of positive integers is the shell sequence for a unique $c$-causet.

Let $\omega =a_1a_2\cdots a_n$ be a path in $x$ where $a_j\in\positive$ are the labels of the vertices. The \textit{length} of $\omega$ is
\begin{equation*}
\lscript (\omega )=\sqbrac{\sum _{j=1}^{n-1}(a_{j+1}-a_j)^2}^{1/2}
\end{equation*}
A \textit{geodesic} from $a$ to $b$ where $a<b$, is a path from $a$ to $b$ of smallest length. For $a<b$, we define the \textit{distance from}
$a$ \textit{to} $b$ to be $d(a,b)=\lscript (\omega )$ where $\omega$ is a geodesic from $a$ to $b$. If $a$ and $b$ are incomparable then
$a$ and $b$ are in the same shell and we define $d(a,b)=\ab{\ell (a)-\ell (b)}$. It can be shown that if $a<b<c$ then
$d(a,c)\le d(a,b)+d(b,c)$ \cite{gud14}. Moreover, it is clear that if $a,b,c\in S_j(x)$ for some $j$, then again, $d(a,c)\le d(a,b)+d(b,c)$. Even if we extend the definition of $d$ by defining $d(b,a)=d(a,b)$ when $a<b$, the triangle inequality does not hold in general.

The shell sequence determines the ``shape'' or geometry of a $c$-causet $x$. We view $x$ as a framework or scaffolding of a possible universe. The vertices represent tiny cells that may or may not be occupied by a particle. This geometry gives the kinematics of the system. The dynamics is described in terms of paths in $x$. A free particle tends to move along a geodesic \cite{gud15}, while interacting particles may be forced to propagate along other paths. The remainder of this article describes how this propagation is governed by discrete wave equations. We have already considered gravitational wave equations in \cite{gud15}. Section~2 presents a force-free wave equation and Section~3 studies electric force wave equations. Finally, Section~4 considers the strong nuclear force. Many examples that illustrate this theory are given for the simplest nontrivial $c$-causet.

\section{Difference Operators} 
Let $\omega =a_1a_2\cdots a_n$ and $\omega '=b_1b_2\cdots b_n$ be two paths with the same length in a $c$-causet $x$. We say that
$\omega ,\omega '$ are \textit{equal-time} paths if $h(a_1)=h(b_1)$ from which it follows that $h(a_j)=h(b_j)$, $j=1,2,\ldots ,n$.
Two equal-time paths are \textit{noncrossing} if $a_j\ne b_j$, $j=1,\ldots ,n$. The reason that we shall only consider noncrossing paths when they are equal-time is that we need the condition $d(a_j,b_j)\ne 0$. Notice that this condition automatically holds if $\omega ,\omega '$ are not equal-time. If $v$ is a complex-valued function of two variables whose domain includes $\brac{(a_j,b_j)\colon j=1,2,\ldots ,n}$, we define the
\textit{covariant difference operator} on $v$ by
\begin{equation}         
\label{eq21}
\nabla _{\omega ,\omega '}v(a_j,b_j)=d(a_{j-1},b_{j-1})v(a_j,b_j)-d(a_j,b_j)v(a_{j-1},b_{j-1})
\end{equation}
$j=2,3,\ldots ,n$.

We only consider $\nabla _{\omega ,\omega '}$ for noncrossing paths when $\omega ,\omega '$ are equal-time. Notice that
$\nabla _{\omega ,\omega '}d(a_j,b_j)=0$, $j=2,3,\ldots ,n$, which is why we call $\nabla _{\omega ,\omega '}$, the
\textit{covariant} difference operator. We think of the function $v$ in \eqref{eq21} as describing amplitudes of a pair of particles moving along the paths $\omega ,\omega '$. We shall usually assume that the two particles are indistinguishable because it makes the theory slightly simpler, but this assumption can be dropped. We call $v$ a \textit{force-free wave function} for $(\omega ,\omega ')$ if
\begin{equation}         
\label{eq22}
\nabla _{\omega ,\omega '}v(a_j,b_j)=0
\end{equation}
$j=2,3,\ldots ,n$. We use this terminology because such a $v$ is supposed to describe a pair of particles that are not subject to forces. This includes gravity which is not really a force because gravity is given by the geometry of the $c$-causet. But all particles are subject to gravity so this is actually an idealized approximation in which gravity is neglected. More realistic situations are treated in the next two sections where we include force terms.

\begin{thm}       
\label{thm21}
A function $v(a_j,b_j)$, $j=1,2,\ldots ,n$, satisfies \eqref{eq22} if and only if $v(a_j,b_j)=cd(a_j,b_j)$, $j=2,3,\ldots ,n$ for some fixed
$c\in\complex$.
\end{thm}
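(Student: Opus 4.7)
The plan is to show that the equation $\nabla_{\omega,\omega'}v(a_j,b_j)=0$ amounts to saying that the ratio $v(a_j,b_j)/d(a_j,b_j)$ is independent of $j$, and then to read off the stated form of $v$. The only thing that needs checking before dividing is that $d(a_j,b_j)\neq 0$ for all $j$, and this is exactly the reason the author restricts to noncrossing paths in the equal-time case (and the non-equal-time case is automatic, as the excerpt notes).

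For the nontrivial ($\Rightarrow$) direction, I would fix $j\in\{2,\ldots,n\}$ and rewrite \eqref{eq22} as
\begin{equation*}
\frac{v(a_j,b_j)}{d(a_j,b_j)}=\frac{v(a_{j-1},b_{j-1})}{d(a_{j-1},b_{j-1})},
\end{equation*}
which is legitimate because both $d$ values are nonzero. A one-line induction on $j$ then gives that this common ratio equals some fixed $c\in\complex$, namely $c=v(a_1,b_1)/d(a_1,b_1)$ (assuming $d(a_1,b_1)\neq 0$, which it is), so $v(a_j,b_j)=c\,d(a_j,b_j)$ for every $j\in\{2,\ldots,n\}$, as required.

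The ($\Leftarrow$) direction is a direct substitution: if $v(a_j,b_j)=c\,d(a_j,b_j)$, then
\begin{equation*}
\nabla_{\omega,\omega'}v(a_j,b_j)=d(a_{j-1},b_{j-1})\cdot c\,d(a_j,b_j)-d(a_j,b_j)\cdot c\,d(a_{j-1},b_{j-1})=0,
\end{equation*}
which is the special case of the observation already made in the paper that $\nabla_{\omega,\omega'}d(a_j,b_j)=0$.

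There is no real obstacle here; the only point that deserves any care is flagging the nonvanishing of $d(a_j,b_j)$, since otherwise the division that drives the forward direction would be illegal. Once that is isolated, both directions are essentially one line each.
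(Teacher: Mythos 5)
Your proof is correct and follows essentially the same route as the paper's: both amount to telescoping the recurrence $v(a_j,b_j)/d(a_j,b_j)=v(a_{j-1},b_{j-1})/d(a_{j-1},b_{j-1})$ down to $c=v(a_1,b_1)/d(a_1,b_1)$, with the converse by direct substitution. Your explicit flagging of $d(a_j,b_j)\neq 0$ (guaranteed by the noncrossing hypothesis) is a point the paper leaves implicit but is the same underlying justification.
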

\begin{proof}
If $v$ satisfies \eqref{eq22} then by \eqref{eq21} we have
\begin{equation*}
v(a_j,b_j)=\frac{d(a_j,b_j)}{d(a_{j-1},b_{j-1})}\,v(a_{j-1},b_{j-1})\quad j=2,3,\ldots ,n
\end{equation*}
If $j\ge 3$ we can continue to obtain
\begin{align*}
v(a_j,b_j)&=\frac{d(a_j,b_j)}{d(a_{j-1},b_{j-1})}\,\frac{d(a_{j-1},b_{j-1})}{d(a_{j-2},b_{j-2})}\,v(a_{j-2},b_{j-2})\\\noalign{\medskip}
  &=\frac{d(a_j,b_j)}{d(a_{j-2},b_{j-2})}\,v(a_{j-2},b_{j-2})
\end{align*}
Continue this process to obtain
\begin{equation*}
v(a_j,b_j)=\frac{d(a_j,b_j)}{d(a_1,b_1)}\,v(a_1,b_1)\end{equation*}
for $j=1,2,\ldots ,n$. Letting $c=v(a_1,b_1)/d(a_1,b_1)$ gives our result. The converse clearly holds.
\end{proof}

We shall mainly consider equal-time noncrossing paths $\omega ,\omega '$ in the sequel because they give a simpler quantum theory. Moreover, in this case $d(a_j,b_j)=\ab{a_j-b_j}$, where we represent a vertex by its label, which is a further simplification. In this situation, let $v$ be a force-free wave function for $(\omega ,\omega ')$ and assume that $v(a_1,b_1)=d(a_1,b_1)$. Then $c=1$ in Theorem~\ref{thm21} so that
\begin{equation*}
v(a_j,b_j)=d(a_j,b_j)=\ab{a_j-b_j}
\end{equation*}
We do not lose generality with this assumption because we shall later normalize our wave functions so the theory is independent of a multiplicative constant. Since $v$ depends on $\omega$ and $\omega '$ we sometimes write
\begin{equation*}
v(a_n,b_n)=v(\omega ,\omega ',a_n,b_n)
\end{equation*}
and think of $v$ as the amplitude that the particles move along $\omega$ and $\omega '$ starting at $(a_1,b_1)$ and ending at $(a_n,b_n)$. According to the usual quantum formalism, the amplitude that the particles move to positions $(a_n,b_n)$ becomes
\begin{equation}         
\label{eq23}
\vhat (a_n,b_n)=\sum _{\omega ,\omega '}v(\omega ,\omega ',a_n,b_n)=\sum _{\omega ,\omega '}\ab{a_n-b_n}
\end{equation}
where $(\omega ,\omega ')$ are all pairs of noncrossing paths from $(a_1,b_1)$ to $(a_n,b_n)$. Letting $\tau$ be the number of such path pairs \eqref{eq23} gives
\begin{equation}         
\label{eq24}
\vhat (a_n,b_n)=\tau\ab{a_n-b_n}
\end{equation}

Suppose $h(a_1)=h(b_1)=p$ and $h(a_n)=h(b_n)=q$, $p<q$. We then have that
\begin{equation}         
\label{eq25}
\tau =s_{p+1}(x)\sqbrac{s_{p+1}(x)-1}s_{p+2}(x)\sqbrac{s_{p+2}(x)-1}\cdots s_q(x)\sqbrac{s_q(x)-1}
\end{equation}
By \eqref{eq24} and \eqref{eq25}, the position amplitude $\vhat (a_n,b_n)$ does not depend on $a_1$ and $b_1$ and only depends on
$a_n$ and $b_n$ through the term $\ab{a_n-b_n}$. We next normalize $\vhat (a_n,b_n)$ to obtain the \textit{position state} $\psi (a_n,b_n)$. Thus, if $S_q(x)=\brac{c_1,c_2,\ldots ,c_r}$ where the $c_j$ are vertex labels in increasing order, then the normalization constant becomes:
\begin{equation*}
N^2=\sum\brac{\ab{\vhat (c_j,c_k)}^2\colon c_j<c_k}
\end{equation*}
and $\psi (c_j,c_k)=\vhat (c_j,c_k)/N$. Since $\tau$ is a constant that appears in every term of $N$ and also in $\vhat (c_j,c_k)$ it cancels out and we can omit it. We then obtain the normalization constant
\begin{equation*}
N_1^2=\sum _{j=1}^{r-1}(r-j)j^2
\end{equation*}
and $\psi (c_j,c_k)=\ab{j-k}/N_1$.

As usual in the quantum formalism, we define the probability that the two particles end at positions $(c_j,c_k)$ to be
\begin{equation*}
p(c_j,c_k)=\ab{\psi (c_j,c_k)}^2=\frac{\ab{j-k}^2}{N_1^2}
\end{equation*}
We call $p(c_j,c_k)$ the \textit{force-free distribution}. The constant $N_1$ is not so important. What is important is the relative sizes of the probabilities $p(c_j,c_k)$. For example, if $c_k-c_j=r-1$ and $c_t-c_s=1$, then
\begin{equation*}
\frac{p(c_s,c_t)}{p(c_j,c_k)}=\frac{1}{(r-1)^2}
\end{equation*}
which can be quite small. However, we must remember that there are $r-1$ vertices with $c_t-c_s=1$ and only one vertex with $c_k-c_j=r-1$ so letting $A$ be the set of vertices satisfying $c_t-c_s=1$ and $B$ be the set satisfying $c_k-c_j=r-1$ we have
\begin{equation*}
\frac{\rmprob (A)}{\rmprob (B)}=\frac{1}{r-1}
\end{equation*}
This can still be small but it is not as small as $1/(r-1)^2$.

\begin{exam}{1}  
Suppose $S_q=\brac{c_1,c_2,c_3,c_4,c_5}$. We then have
\begin{equation*}
N_1^2=\sum _{j=1}^4(5-j)j^2=4+3\ctimes 2^2+2\ctimes 3^2+4^2=50
\end{equation*}
The probabilities become:
\begin{align*}
p(c_1,c_2)&=p(c_2,c_3)=p(c_3,c_4)=p(c_4,c_5)=\tfrac{1}{50}\\\noalign{\medskip}
p(c_1,c_3)&=p(c_2,c_4)=p(c_3,c_5)=\tfrac{4}{50}=\tfrac{2}{25}\\\noalign{\medskip}
p(c_1,c_4)&=p(c_2,c_5)=\tfrac{9}{50}\\\noalign{\medskip}
p(c_1,c_5)&=\tfrac{16}{50}=\tfrac{8}{25}
\end{align*}
We conclude that being three apart is more probable than any other configuration.\qquad\qedsymbol
\end{exam} 

\section{Electric Forces} 
We now discuss an electric force wave equation. As in Section~2, let $\omega =a_1a_2\cdots a_n$, $\omega '=b_1b_2\cdots b_n$ be two paths in $x$. If $\omega ,\omega '$ are equal-time paths, we assume they are noncrossing. Suppose we have two charged particles with electric charges $e_1$ and $e_2$. We again let $v$ be a complex-valued function of two variables whose domain includes
$\brac{(a_j,b_j)\colon j=1,2,\ldots ,n}$. If $e_1e_2\ge 0$, the \textit{repulsive electric-force wave equation} is defined to be
\begin{equation}         
\label{eq31}
i\nabla _{\omega ,\omega '}v(a_j,b_j)=\frac{e_1e_2}{d(a_j,b_j)}\,v(a_j,b_j)
\end{equation}
$j=2,3,\ldots ,n$. If $e_1e_2\le 0$, the \textit{attractive electric-force wave equation} is defined to be
\begin{equation}         
\label{eq32}
i\nabla _{\omega ,\omega '}v(a_j,b_j)=\frac{e_1e_2}{d(a_j,b_j)}\,v(a_{j-1},b_{j-1})
\end{equation}
$j=2,3,\ldots ,n$. Of course, if $e_1=0$ or $e_2=0$, then \eqref{eq31} and \eqref{eq32} reduce to the force-free wave equation \eqref{eq22}. If $v$ satisfies \eqref{eq31} or \eqref{eq32} for $j=2,3,\ldots ,n$, then $v$ is called an \textit{electric force wave function}. The next result shows that \eqref{eq31} or \eqref{eq32} and the \textit{initial condition} $v(a_1,b_1)$ uniquely determine $v$.

\begin{thm}       
\label{thm31}
A function $v(a_j,b_j)$, $j=1,2,\ldots ,n$, satisfies \eqref{eq31} if and only if 
\begin{equation}        
\label{eq33}
v(a_j,b_j)=\frac{i^{j_1}}{\sqbrac{\frac{id(a_{j-1},b_{j-1})}{d(a_j,b_j)}\,-\,\frac{e_1e_2}{d(a_j,b_j)^2}}\cdots
 \sqbrac{\frac{id(a_1,b_1)}{d(a_2,b_2)}\,-\,\frac{e_1e_2}{d(a_2,b_2)^2}}}\,v(a_1,b_1)
\end{equation}
and $v(a_j,b_j)$, $j=1,2,\ldots ,n$, satisfies \eqref{eq32} if and only if 
\begin{equation}        
\label{eq34}
v(a_j,b_j)=\frac{d(a_j,b_j)^2-ie_1e_2}{d(a_{j-1},b_{j-1})d(a_j,b_j)}\cdots
 \frac{d(a_2,b_2)^2-ie_1e_2}{d(a_1,b_1)d(a_2,b_2)}\,v(a_1,b_1)
\end{equation}
\end{thm}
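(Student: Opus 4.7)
The plan is to treat both equations as first-order linear recurrences for $v(a_j,b_j)$ in terms of $v(a_{j-1},b_{j-1})$, solve one step explicitly, and then iterate down to the initial condition $v(a_1,b_1)$. The converse in each case is an immediate substitution check.

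For the repulsive equation \eqref{eq31}, I would substitute the definition \eqref{eq21} of $\nabla_{\omega,\omega'}$ and collect the two terms containing $v(a_j,b_j)$ on the left-hand side. This gives
\[
\left[i\,d(a_{j-1},b_{j-1})-\frac{e_1e_2}{d(a_j,b_j)}\right]v(a_j,b_j)=i\,d(a_j,b_j)\,v(a_{j-1},b_{j-1}).
\]
Dividing both sides by $d(a_j,b_j)$ (legitimate since noncrossing equal-time, or non-equal-time, paths guarantee $d(a_j,b_j)\ne 0$) isolates $v(a_j,b_j)$ as $i\,v(a_{j-1},b_{j-1})$ divided by the bracket $\frac{id(a_{j-1},b_{j-1})}{d(a_j,b_j)}-\frac{e_1e_2}{d(a_j,b_j)^2}$. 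Telescoping this one-step formula from $j$ down to $2$ accumulates one factor of $i$ per step, producing the $i^{j-1}$ in the numerator and the product of the $j-1$ bracketed denominators in \eqref{eq33}. (I would read the exponent $i^{j_1}$ in the statement as $i^{j-1}$.)

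For the attractive equation \eqref{eq32}, the same substitution puts both occurrences of $v(a_{j-1},b_{j-1})$ on one side, yielding
\[
i\,d(a_{j-1},b_{j-1})\,v(a_j,b_j)=\left[i\,d(a_j,b_j)+\frac{e_1e_2}{d(a_j,b_j)}\right]v(a_{j-1},b_{j-1}).
\]
Solving for $v(a_j,b_j)$, multiplying numerator and denominator by $d(a_j,b_j)$, and then dividing top and bottom by $i$ (using $1/i=-i$) converts the factor into $[d(a_j,b_j)^2-ie_1e_2]/[d(a_{j-1},b_{j-1})d(a_j,b_j)]$. Iterating from $j$ down to $2$ produces the telescoping product in \eqref{eq34}.

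The converse directions are routine: in each case one substitutes the closed-form expression for $v(a_j,b_j)$ and $v(a_{j-1},b_{j-1})$ into \eqref{eq21}, notices that the iterated product for $v(a_j,b_j)$ contains the product for $v(a_{j-1},b_{j-1})$ as a factor, and reduces to verifying a single identity at step $j$. The main (though quite mild) obstacle is purely clerical: for the repulsive case one must correctly track the factor of $i$ accumulated at each step and resist the temptation to absorb the $1/d(a_j,b_j)$ factors in a way that changes the form of the denominator in \eqref{eq33}; for the attractive case the only delicate point is the consistent use of $1/i=-i$ so that the answer comes out with $-ie_1e_2$ rather than $+ie_1e_2/(\text{something})$.
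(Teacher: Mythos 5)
Your proposal is correct and follows essentially the same route as the paper: in each case you rewrite the wave equation as a one-step linear recurrence for $v(a_j,b_j)$ in terms of $v(a_{j-1},b_{j-1})$, telescope down to $v(a_1,b_1)$, and verify the converse by substitution, with the same algebraic normalizations (dividing by $d(a_j,b_j)$ in the repulsive case, by $i$ in the attractive case) that the paper uses. Your reading of the typo $i^{j_1}$ as $i^{j-1}$ is also the intended one.
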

\begin{proof}
If $v(a_j,b_j)$ satisfies \eqref{eq31}, we have
\begin{align*}  
i\sqbrac{d(a_{j-1},b_{j-1}v(a_j,b_j)-d(a_j,b_j)v(a_{j-1},b_{j-1}}=\frac{e_1e_2}{d(a_j,b_j)}\,v(a_j,b_j)\\
\intertext{Hence,}
v(a_j,b_j)=\frac{id(a_j,b_j)}{id(a_{j-1},b_{j-1})-\frac{e_1e_2}{d(a_j,b_j)}}\,v(a_{j-1},b_{j-1})
\end{align*}
Continuing this process we obtain
\begin{align*}
v(a_j,b_j)&=\frac{id(a_j,b_j)}{id(a_{j-1},b_{j-1})-\frac{e_1e_2}{d(a_j,b_j)}}
  \ctimes\frac{id(a_{j-1},b_{j-1})}{id(a_{j-2},b_{j-2})-\frac{e_1e_2}{d(a_{j-1},b_{j-1})}}\,v(a_{j-2},b_{j-2})\\
  &\ \vdots\\
  &=\frac{id(a_j,b_j)}{id(a_{j-1},b_{j-1})-\frac{e_1e_2}{d(a_j,b_j)}}\cdots\frac{id(a_2,b_2)}{id(a_1,b_1)-\frac{e_1e_2}{d(a_2,b_2)}}
  \,v(a_1,b_1)
\end{align*}
from which the result follows. If $v(a_j,b_j)$ satisfies \eqref{eq32} we have
\begin{equation*}
i\sqbrac{d(a_{j-1},b_{j-1}v(a_j,b_j)-d(a_j,b_j)v(a_{j-1},b_{j-1}}=\frac{e_1e_2}{d(a_j,b_j)}\,v(a_{j-1},b_{j-1})
\end{equation*}
Hence,
\begin{align*}
v(a_j,b_j)&=\frac{id(a_j,b_j)^2+e_1e_2}{id(a_{j-1},b_{j-1})d(a_j,b_j}\,v(a_{j-1},b_{j-1})\\\noalign{\medskip}
  &=\frac{d(a_j,b_j)^2-ie_1e_2}{d(a_{j-1},b_{j-1})d(a_j,b_j)}\,v(a_{j-1},b_{j-1})\\
    &\ \vdots\\
  &=\frac{d(a_j,b_j)^2-ie_1e_2}{d(a_{j-1},b_{j-1})d(a_j,b_j)}\cdots\frac{d(a_2,b_2)^2-ie_1e_2}{id(a_1,b_1)d(a_2,b_2)}\,v(a_1,b_1)
\end{align*}
The converses are straightforward.
\end{proof}

\begin{exam}{2}  
One of the simplest nontrivial $c$-causets has shell sequence $(1,2,2,3)$. Delineating shells by semicolons, we can label the vertices as $(1;2,3;4,5;6,7,8)$. We now examine two indistinguishable particles both having charge $e$ that move from the first shell $(2,3)$ to the top shell $(6,7,8)$. The paths between these shells are:
$\omega _1=2-4-6$, $\omega _2=3-4-6$, $\omega _3=2-5-6$, $\omega _4=3-5-6$, $\omega _5=2-4-7$, $\omega _6=2-5-7$,
$\omega _7=3-4-7$, $\omega _8=3-5-7$, $\omega _9=2-4-8$, $\omega _{10}=2-5-8$, $\omega _{11}=3-4-8$, $\omega _{12}=3-5-8$. Applying \eqref{eq33} we obtain
\begin{align*}
v(\omega _1,\omega _8,6,7)&=v(\omega _2,\omega _6,6,7)=v(\omega _3,\omega _7,6,7)=v(\omega _4,\omega _5,6,7)\\
  &=-\frac{1}{(i-e^2)^2}=\frac{(1-e^4)-2e^2i}{(1-e^4)^2+4e^4}
\end{align*}
Moreover,
\begin{align*}
v(\omega _1,\omega _{12},6,8)&=v(\omega _1,\omega _{10},6,8)=v(\omega _3,\omega _{11},6,8)=v(\omega _8,\omega _9,6,8)\\
  &=-\frac{-1}{\paren{\frac{i}{2}-\frac{e^2}{4}}(i-e^2)}=\frac{4(2-e^4)-12e^2i}{(2-e^4)^2+9e^4}
\end{align*}
Summing over noncrossing path pairs as in \eqref{eq23} we obtain the position wave functions
\begin{align*}
\vhat (6,7)&=4\sqbrac{\frac{(1-e^4)-2e^2i}{(1-e^4)^2+4e^4}}\\\noalign{\medskip}
\vhat (6,8)&=4\sqbrac{\frac{4(2-e^4)-12e^2i}{(2-e^4)^2+9e^4}}
\end{align*}
and by symmetry $\vhat (7,8)=\vhat (6,7)$. The normalization constant becomes
\begin{equation*}
N^2=2\ab{\vhat (6,7)}^2+\ab{\vhat (6,8)}^2=\frac{32}{(1+e^4)^2}+\frac{256}{(1+e^4)(4+e^4)}
\end{equation*}
We compute the probabilities as
\begin{align*}
p(6,7)=p(7,8)=\frac{\ab{\vhat (6,7)}^2}{N^2}=\frac{4+e^4}{6(4+3e^4)}\\
\intertext{and}
p(6,8)=1-2p(6,7)=\frac{8(1+e^4)}{3(4+3e^4)}
\end{align*}
Now $p(6,7)$ and $p(6,8)$ are functions of $e$ and we write $p(6,7)=p(6,7)(e)$ and $p(6,8)=p(6,8)(e)$. Notice that $p(6,7)(0)=1/6$ and
$p(6,8)(0)=2/3$ which agree with the force-free probabilities of Section ~2. Moreover, for large $e$ we have
\begin{equation*}
\lim _{e\to\infty}p(6,7)(e)=\tfrac{1}{18},\quad\lim _{e\to\infty}p(6,8)(e)=\tfrac{8}{9}
\end{equation*}
Thus, under this repulsive electric force, the particles prefer to be farther apart. The function $p(6,7)(e)$ decreases monotonically from $1/6$ at $e=0$ to $1/18$ at $e\approx 3$. The function $p(6,8)(e)$ increases monotonically from $2/3$ at $e=0$ to $8/9$ at $e\approx 3$.
\qquad\qedsymbol
\end{exam} 

Although we shall not use this in the sequel, we now briefly discuss how to place this theory within the tensor product formalism. In Example~2, we think of $S_3(x)=\brac{6,7,8}$ as a 3-dimensional Hilbert space with orthonormal basis
$\sixhat ,\sevenhat ,\eighthat$. We can extend the states $\psi (j,k)=\vhat (j,k)/N$ by defining $\psi (k,j)=-\psi(j,k)$. In fact, we see from
\eqref{eq33} and \eqref{eq34} that if $v(a_1,b_1)=-v(b_1,a_1)$ then $v(a_j,b_j)$ and hence $\psi (a_j,b_j)$ are automatically antisymmetric. (We could also make them symmetric in this way.) If we do this antisymmetrization, the state we have constructed in Example~2 can be written as
\begin{align*}
\psi&=\frac{1}{\sqrt{2}}\left[\psi (6,7)\sixhat\otimes\sevenhat +\psi (7,6)\sevenhat\otimes\sixhat +\psi (7,8)\sevenhat\otimes\eighthat
  +\psi(8,7)\eighthat\otimes\sevenhat \right.\\
  &\quad\left.+\psi (6,8)\sixhat\otimes\eighthat +\phi (8,6)\eighthat\otimes\sixhat\right]\\
  &=\psi (6,7)\frac{(\,\sixhat\otimes\sevenhat -\sevenhat\otimes\sixhat\,)}{\sqrt{2}}
  +\psi (7,8)\frac{(\,\sevenhat\otimes\eighthat -\eighthat\otimes\sevenhat\,)}{\sqrt{2}}
  +\psi (6,8)\frac{(\,\sixhat\otimes\eighthat -\eighthat\otimes\sixhat\,)}{\sqrt{2}}
\end{align*}
Then the probability that the particle starting at vertex~2 ends at vertex~6 while the particle starting at vertex~3 ends at vertex~7 is
\begin{equation*}
\ab{\elbows{\psi ,\sixhat\otimes\sevenhat}}^2=\tfrac{1}{2}\ab{\psi (6,7)}^2
\end{equation*}
Similarly, if we reverse vertices 2 and 3 in the above we obtain
\begin{equation*}
\ab{\elbows{\psi ,\sevenhat\otimes\sixhat}}^2=\tfrac{1}{2}\ab{\psi (6,7)}^2
\end{equation*}
Similar results hold for vertex pairs $(7,8)$ and $(6,8)$.

\begin{exam}{3}  
We can adjoin a vertex labeled 9 to the top shell in Example~2 to obtain the shell sequence $(1,2,2,4)$. This $c$-causet can be analyzed by constructing the new paths $\omega _{13}=2-4-9$, $\omega _{14}=2-5-9$, $\omega _{15}=3-4-9$, $\omega _{16}=3-5-9$. We then have 
\begin{align*}
v(\omega _1,\omega _{16},6,9)&=v(\omega _2,\omega _{14},6,9)=v(\omega _3,\omega _{15},6,9)=v(\omega _4,\omega _{13},6,9)\\
  &=-\frac{-1}{\paren{\frac{i}{3}-\frac{e^2}{9}}(i-e^2)}=\frac{9(3-e^4)-36e^2i}{(3-e^4)^2+16e^4}
\end{align*}
This gives
\begin{align*}
\vhat (6,9)=36\sqbrac{\frac{(3-e^4)-4e^2i}{(3-e^4)^2+16e^4}}\\
\intertext{Hence,}
\ab{\vhat (6,9)}^2=\frac{36^2}{(9+e^4)(1+e^4)}
\end{align*}
Using the values from Example~2 we obtain
\begin{align*}
p(6,7)&=\frac{\ab{\vhat (6,7)}^2}{3\ab{\vhat (6,7)}^2+2\ab{\vhat (6,8)}^2+\ab{\vhat (6,9)}^2}\\
&=\frac{1}{3+2\frac{\ab{\vhat (6,8)}^2}{\ab{\vhat (6,7)}^2}+\frac{\ab{\vhat (6,9)}^2}{\ab{\vhat (6,7)}^2}}\\
&=\frac{1}{3+\frac{32(1+e^4)}{4+e^4}+\frac{81(1+e^4)}{9+e^4}}
\end{align*}
In a similar way we obtain
\begin{align*}
p(6,8)&=\frac{1}{2+\frac{3(4+e^4)}{16(1+e^4)}+\frac{81(4+e^4)}{16(9+e^4)}}\\
p(6,9)&=\frac{1}{1+\frac{9+e^4}{27(1+e^4)}+\frac{32(9+e^4)}{81(4+e^4)}}
\end{align*}
As in Section~2 we obtain the force-free distribution
\begin{align*}
p(6,7)(0)&=p(7,8)(0)=p(8,9)(0)=\tfrac{1}{20}\\
p(6,8)(0)&=p(7,9)(0)=\tfrac{1}{5}\\
p(6,9)(0)&=\tfrac{9}{20}
\end{align*}
For large values of $e$ (even for $e\approx 3$) we obtain
\begin{equation*}
\lim _{e\to\infty}p(6,7)(e)=\tfrac{1}{116},\quad\lim _{e\to\infty}p(6,8)(e)=\tfrac{16}{116},\quad\lim _{e\to\infty}p(6,9)(e)=\tfrac{81}{116}
\end{equation*}
The functions $p(6,7)(e)$ and $p(6,8)(e)$ decrease monotonically, while $P(6,9)(e)$ increases monotonically. Again, the two particles prefer to move farther apart. This phenomenon becomes more pronounced as more vertices are adjoined to the top shell.\qquad\qedsymbol
\end{exam} 

\begin{exam}{4}  
This example considers the attractive wave equation \eqref{eq32} with $e_1=-e_2=e$ for the $c$-causet of Example~2. Applying \eqref{eq34} we have
\begin{align*}
v(\omega _1,\omega _8,6,7)&=v(\omega _2,\omega _6,6,7)=v(\omega _3,\omega _7,6,7)=v(\omega _4,\omega _5,6,7)\\
&=(1+ie^2)^2\\
v(\omega _1,\omega _{12},6,8)&=v(\omega _2,\omega _{10},6,8)=v(\omega _3,\omega _{11},6,8)=v(\omega _4,\omega _9,6,8)\\
&=\tfrac{1}{2}(4+ie^2)(1+ie^2)
\end{align*}
We conclude that
\begin{align*}
\vhat (6,7)&=\vhat (7,8)=4(1-e^4)+8e^2i\\
\vhat (6,8)&=2(4-e^4)+10e^2i
\end{align*}
Hence,
\begin{align*}
\ab{\vhat (6,7)}^2&=\ab{\vhat (7,8)}^2=16(1+e^4)^2\\
\ab{\vhat (6,8)}^2&=4(1+e^4)(16+e^4)
\end{align*}
Computing the probabilities gives
\begin{align*}
p(6,7)&=p(7,8)=\frac{\ab{\vhat (6,7)}^2}{2\ab{\vhat (6,7)}^2+\ab{\vhat (6,8)}^2}=\frac{1}{2+\frac{\ab{\vhat (6,8)}^2}{\ab{\vhat (6,7)}^2}}\\
  &=\frac{4(1+e^4)}{3(8+3e^4)}\\
  p(6,8)&=\frac{1}{1+\frac{2\ab{\vhat (6,7)}^2}{\ab{\vhat (6,8)}^2}}=\frac{16+e^4}{3(8+3e^4)}
\end{align*}
Of course, we also have that $p(6,8)=1-2(p(6,7)$. As before, we obtain
\begin{equation*}
p(6,7)(0)=p(7,8)(0)=1/16,\quad p(6,8)(0)=2/3
\end{equation*}
However, in contrast with Examples~2 and 3 we have
\begin{align*}
\lim _{e\to\infty}p(6,7)(e)&=\lim _{e\to\infty}p(7,8)(e)=\tfrac{4}{9}\\
\lim _{e\to\infty}p(6,8)(e)&=\tfrac{1}{9}
\end{align*}
Thus, the particles prefer to be close together under the attractive force. Notice that
\begin{equation*}
p(6,7)(1)=p(7,8)(1)=\tfrac{8}{33},\quad p(6,8)(1)=\tfrac{17}{33}
\end{equation*}
so that for $e\approx 1$, the probabilities that the two particles are within one unit and two units are about the same.\qquad\qedsymbol
\end{exam} 

\begin{exam}{5}  
We now compute probabilities for the $c$-causet in Example~3 with an attractive electric force as in Example~4. We again have the new paths $\omega _{13},\omega _{14},\omega _{15},\omega _{16}$ for which we obtain
\begin{align*}
v(\omega _1,\omega _{16},6,9)&=v(\omega _2,\omega _{14},6,9)=v(\omega _3,\omega _{15},6,9)=v(\omega _4,\omega _{13},6,9)\\
  &=\frac{9+ie^2}{3}\,(1+ie^2)
\end{align*}
This gives
\begin{align*}
\vhat (6,9)=\tfrac{4}{3}(9-e^4+10e^2i)\\
\intertext{Hence,}
\ab{\vhat (6,9)}^2=\tfrac{16}{9}\,(1+e^4)(81+e^4)
\end{align*}
Using the values from Example~4 we obtain
\begin{align*}
p(6,7)&=\frac{1}{3+\frac{2\ab{\vhat (6,8)}^2}{\ab{\vhat (6,7)}^2}+\frac{\ab{\vhat (6,9)}^2}{\ab{\vhat (6,7)}^2}}
  =\frac{1}{3+\frac{16+e^4}{2(1+e^4)}+\frac{81+e^4}{9(1+e^4)}}\\
  &=\frac{18(1+e^4)}{5(72+13e^4)}
\end{align*}
In a similar way we have
\begin{align*}
p(6,8)&=\frac{9(16+e^4)}{10(72+13e^4)}\\
p(6,9)&=\frac{2(81+e^4)}{5(72+13e^4)}
\end{align*}
As before
\begin{equation*}
p(6,7)(0)=\tfrac{1}{20},\quad p(6,8)(0)=\tfrac{1}{5},\quad p(6,9)(0)=\frac{9}{20}
\end{equation*}
But for large $e$ we have
\begin{equation*}
\lim _{e\to\infty}p(6,7)(e)=\tfrac{18}{65},\quad\lim _{e\to\infty}p(6,8)(e)=\tfrac{9}{130},\quad\lim _{e\to\infty}p(6,9)(e)=\tfrac{2}{65}
\end{equation*}
Again, for $e\approx 3$ and larger, the particles prefer to be together.\qquad\qedsymbol
\end{exam} 

\section{Strong Nuclear Force} 
We propose that the strong nuclear force acts like an attractive spring force with wave equation given by
\begin{equation}         
\label{eq41}
i\nabla _{\omega ,\omega '}v(a_j,b_j)=-gd(a_{j-1},b_{j-1})v(a_{j-1},b_{j-1})
\end{equation}
where $g>0$ is a constant. As in Theorem~\ref{thm31}, a function $v(a_j,b_j)$ satisfies \eqref{eq41} for $j=2,3,\ldots ,n$ if and only if $v$ as the form
\begin{equation}         
\label{eq42}
v(a_j,b_j)=\sqbrac{\frac{d(a_j,b_j)}{d(a_{j-1},b_{j-1})}+ig}\cdots\sqbrac{\frac{d(a_2,b_2)}{d(a_1,b_1)}+ig}v(a_1,b_1)
\end{equation}
We now repeat Examples~2 and 3 for the strong nuclear force.

\begin{exam}{6}  
As in Example~2, we consider the $c$-causet with shell sequence $(1,2,2,3)$. Applying \eqref{eq42} we obtain
\begin{align*}
v(\omega _1,\omega _8,6,7)&=v(\omega _2,\omega _6,6,7)=v(\omega _3,\omega _7,6,7)=v(\omega _4,\omega _5,6,7)\\
  &=(1+ig)^2\\
v(\omega _1,\omega _{12},6,8)&=v(\omega _1,\omega _{10},6,8)=v(\omega _3,\omega _{11},6,8)=v(\omega _8,\omega _9,6,8)\\
  &=(2+ig)(1+ig)
\end{align*}
Hence,
\begin{align*}
\vhat (6,7)&=4(1+ig)^2=4(1-g^2)+8gi\\
\vhat (6,8)&=4(2+ig)(1+ig)=4(2-g^2)+12gi
\end{align*}
It follows that
\begin{equation*}
\ab{\vhat (6,7)}^2=16(1+g^2)^2,\quad\ab{\vhat (6,8)}^2=16(1+g^2)(4+g^2)
\end{equation*}
We then obtain
\begin{align*}
p(6,7)&=p(7,8)=\frac{1}{2+\frac{\ab{\vhat (6,8)}^2}{\ab{\vhat (6,7)}^2}}=\frac{1}{2+\frac{4+g^2}{1+g^2}}=\frac{1+g^2}{3(2+g^2)}\\
p(6,8)&=1-2p(6,7)=\frac{4+g^2}{3(2+g^2)}
\end{align*}
For the larger $g$ we have
\begin{equation*}
\lim _{g\to\infty}p(6,7)(g)=\lim _{g\to\infty}p(7,8)(g)=\lim _{g\to\infty}p(6,8)(g)=\tfrac{1}{3}
\end{equation*}
Since $g$ is a constant, we should not let $g$ vary. However, if $g$ about 3 or higher, then $p(6,7)$, $p(7,8)$ and $p(6,8)$ are all close to $1/3$.\qquad\qedsymbol
\end{exam} 

\begin{exam}{7}  
We next consider the strong nuclear force for the $c$-causet with shell sequence $(1,2,2,4)$. As in Example~3 we have the new paths
$\omega _{13},\omega _{14}, \omega _{15},\omega _{16}$. Each of the noncrossing path pairs to $(6,9)$ have the wave function
\begin{align*}
v(\omega _1,\omega _{16},6,9)&=(3+ig)(1+ig)\\
\intertext{Hence,}
\vhat (6,9)&=4(3+ig)(1+ig)=4(3-g^2)+16gi\\
\intertext{and}
\ab{\vhat (6,9)}^2&=16(1+g^2)(9+g^2)
\end{align*}
We conclude that
\begin{align*}
p(6,7)&=\frac{1}{3+\frac{2\ab{\vhat (6,8)}^2}{\ab{\vhat (6,7)}^2}+\frac{\ab{\vhat (6,9)}^2}{\ab{\vhat (6,7)}^2}}
  =\frac{1}{3+\frac{2(4+g^2)}{1+g^2}+\frac{9+g^2}{1+g^2}}\\
  &=\frac{1+g^2}{2(10+3g^2)}
\end{align*}
In a similar way
\begin{equation*}
p(6,8)=\frac{4+g^2}{2(10+3g^2)}\,,\quad p(6,9)=\frac{9+g^2}{2(10+3g^2)}
\end{equation*}
We then obtain
\begin{equation*}
\lim _{g\to\infty}p(6,7)(g)=\lim _{g\to\infty}p(6,8)(g)=\lim _{g\to\infty}p(6,9)(g)=\tfrac{1}{6}
\end{equation*}
which is a bit surprising.\qquad\qedsymbol
\end{exam} 

\begin{exam}{8}  
The previous example compels us to consider the strong nuclear force for the $c$-causet with shell sequence $(1,2,2,5)$ to see if there is a pattern. We now have the new paths $\omega _{17}=2-4-10$, $\omega _{18}=2-5-10$, $\omega _{19}=3-4-10$,
$\omega _{20}=3-5-10$. Each of the noncrossing path pairs to $(6,10)$ have the wave function
\begin{equation*}
v(\omega _2,\omega _{17},6,10)=(4+ig)(1+ig)
\end{equation*}
Hence,
\begin{align*}
\vhat (6,10)&=4(4+ig)(1+ig)=4(4-g^2)+20gi\\
\intertext{and}
\ab{\vhat (6,10)}^2&=16(1+g^2)(16+g^2)
\end{align*}
We conclude that
\begin{align*}
p(6,7)&=\frac{\ab{\vhat (6,7)}^2}{4\ab{\vhat (6,7)}^2+3\ab{\vhat (6,8)}^2+2\ab{\vhat (6,9)}^2+\ab{\vhat (6,10)}^2}\\\noalign{\medskip}
  &=\frac{1}
  {4+\frac{3\ab{\vhat (6,8)}^2}{\ab{\vhat (6,7)}^2}+\frac{2\ab{\vhat (6,9)}^2}{\ab{\vhat (6,7)}^2}+\frac{\ab{\vhat (6,10)}^2}{\ab{\vhat (6,7)}^2}}
  \\\noalign{\medskip}
  &=\frac{1}{4+\frac{3(4+g^2)}{1+g^2}+\frac{2(9+g^2)}{1+g^2}+\frac{16+g^2}{1+g^2}}=\frac{1+g^2}{10(5+g^2)}
\end{align*}
In a similar way we obtain
\begin{equation*}
p(6,8)=\frac{4+g^2}{10(5+g^2)}\,,\quad p(6,9)=\frac{9+g^2}{10(5+g^2)}\,,\quad p(6,10)=\frac{16+g^2}{10(5+g^2)}
\end{equation*}
We then have
\begin{equation*}
p(6,7)(0)=\tfrac{1}{50},\quad p(6,8)(0)=\tfrac{2}{25},\quad p(6,9)(0)=\tfrac{9}{50},\quad (6,10)(0)=\tfrac{8}{25}
\end{equation*}
as predicted in Section~2. Moreover, we obtain
\begin{equation*}
\lim _{g\to\infty}p(6,7)(g)=\lim _{g\to\infty}p(6,8)(g)=\lim _{g\to\infty}p(6,9)(g)=\lim _{g\to\infty}p(6,10)(g)=\tfrac{1}{10}
\end{equation*}
It is not hard to show that this pattern continues and for a $c$-causet with shell sequence $(1,2,2,n)$ we have
\begin{equation*}
\lim _{g\to\infty}p(6,7)(g)=\lim _{g\to\infty}p(6,8)(g)=\cdots =\lim _{g\to\infty}p(6,n+5)(g)=\tfrac{n(n-1)}{2}
\end{equation*}
We suspect that this strong regularity is due to the fact that we are considering small $c$-causets. For larger $c$-causets we believe that the situation will be much more complex.\qquad\qedsymbol
\end{exam} 

\end{document}